\documentclass[11pt]{article}
\usepackage[T1]{fontenc}
\usepackage{amsmath,amsthm,amssymb,mathtools}
\usepackage[hmargin=1in,vmargin=1in]{geometry}
\usepackage[hidelinks]{hyperref}
\usepackage{microtype}
\hypersetup{pdftitle={Parity Tests with Ties}}

\newtheorem{theorem}{Theorem}
\newtheorem{lemma}{Lemma}
\newtheorem{corollary}{Corollary}
\newtheorem{proposition}{Proposition}
\newtheorem{remark}{Remark}

\newcommand{\R}{\mathbb{R}}
\newcommand{\ceil}[1]{\left\lceil #1\right\rceil}
\newcommand{\floor}[1]{\left\lfloor #1\right\rfloor}
\newcommand{\card}[1]{\left|#1\right|}
\DeclareMathOperator{\sign}{sign}

\title{Parity Tests with Ties}
\author{Ron Kupfer}
\date{}

\begin{document}
\maketitle

\begin{abstract}
We extend the Ting--Yao randomized maximum-finding algorithm~\cite{TY94}
to inputs that need not be pairwise distinct: each parity test
$P(i,B)=\prod_{a\in B}(x_i-x_a):0$ on $B\subseteq[n]\setminus\{i\}$ is simulated by
$O(\log\card{B})$ ordinary polynomial tests, raising depth from
$O((\log n)^2)$ to $O((\log n)^3)$ while preserving the $O(n^{-c})$ failure
probability for every fixed $c>0$.
\end{abstract}

\section{Introduction}\label{sec:intro}

A polynomial decision tree is a decision tree using ternary tests of the form
$p(x_1, x_2, \dots, x_n) : 0$ where $p$ are polynomials over the reals.
Ting and Yao~\cite{TY94} prove that for every fixed $c>0$ there is a randomized
polynomial decision tree of depth $O((\log n)^2)$ that, on pairwise distinct
inputs $x_1,\dots,x_n\in\R$, finds a maximum with failure probability
$O(n^{-c})$.  Internal nodes are \emph{parity tests}: for some pivot
$i\in[n]$ and $B\subseteq[n]\setminus\{i\}$,
\[
  P(i,B)\;:=\;\prod_{a\in B}(x_i-x_a)
\]
is compared to~$0$.  Under distinctness every factor is nonzero, the outcome is
strict, and
\[
  \sign P(i,B)=(-1)^{\card{\{a\in B:x_a>x_i\}}}.
\]

If repeated values are allowed, some $a\in B$ may satisfy $x_a=x_i$, so
$P(i,B)=0$ and the strict comparison carries no parity information about
indices with $x_a>x_i$.

Section~\ref{sec:simulate} simulates one parity test on any~$B$ using
$O(\log\card{B})$ auxiliary polynomials: squared subset sums locate
$\mu_B:=\card{\{a\in B:x_a=x_i\}}$, then one unsquared sum at the critical
subset size collapses to $\prod_{a\in B,\,x_a\neq x_i}(x_i-x_a)$.
Section~\ref{sec:embed} substitutes this into~\cite{TY94}, giving depth
$O((\log n)^3)$ on arbitrary inputs with the same $O(n^{-c})$ error bound.

On the deterministic side, Theorem~7 of~\cite{Rabin72} implies that
every polynomial decision tree for maximum-finding among $n$ distinct reals
uses at least $n-1$ polynomial tests in the worst case; a linear elimination
tree has depth~$n-1$, so the bound is sharp.

\section{Simulating a parity test}\label{sec:simulate}

Fix $i\in[n]$ and $B\subseteq[n]\setminus\{i\}$, and write $b:=\card{B}$,
\[
  \mu_B\;:=\;\card{\{a\in B : x_a=x_i\}},
  \qquad
  B^{\neq}\;:=\;\{a\in B : x_a\neq x_i\},
\]
so $\card{B^{\neq}}=b-\mu_B$.  Note that $P(i,B)=0$ if and only if $\mu_B>0$.

\subsection{Step 1: counting ties via squared sums}

Set $R_{i,B}(0):=0$ and, for $t\in\{1,\dots,b\}$,
\begin{equation}\label{eq:Rm}
  R_{i,B}(t)\;:=\;\sum_{\substack{D\subseteq B\\ \card{D}=b-t+1}}P(i,D)^2.
\end{equation}

\begin{lemma}\label{lem:count-B}
$R_{i,B}(t)=0$ if and only if $\mu_B\ge t$.
\end{lemma}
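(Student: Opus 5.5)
The plan is to use that $R_{i,B}(t)$ is a sum of squares of reals. For $t\ge 1$, $R_{i,B}(t)=0$ holds if and only if every summand vanishes, that is, $P(i,D)=0$ for every $D\subseteq B$ with $\card{D}=b-t+1$. Since $P(i,D)$ is a product of the factors $x_i-x_a$ over $a\in D$, it vanishes exactly when $D$ contains some $a$ with $x_a=x_i$. Equivalently, $P(i,D)=0$ if and only if $D\not\subseteq B^{\neq}$. So the lemma reduces to a counting statement: every $(b-t+1)$-subset of $B$ meets the tie set $B\setminus B^{\neq}$ if and only if $\mu_B\ge t$.

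The combinatorial step is short. If $\mu_B\ge t$, then $\card{B^{\neq}}=b-\mu_B\le b-t<b-t+1$. No subset of size $b-t+1$ fits inside $B^{\neq}$, so every summand is $0$ and $R_{i,B}(t)=0$. Conversely, if $\mu_B\le t-1$, then $\card{B^{\neq}}\ge b-t+1$. We may choose $D\subseteq B^{\neq}$ of size exactly $b-t+1$; this size lies in $[1,b]$ because $1\le t\le b$. Then $P(i,D)=\prod_{a\in D}(x_i-x_a)\neq 0$, so $P(i,D)^2>0$. All other summands are nonnegative, hence $R_{i,B}(t)>0$.

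The case $t=0$ must be handled separately, since $R_{i,B}(0)=0$ by convention. Here the statement holds because $\mu_B\ge 0$ always.

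There is no real obstacle. The only points needing care are that the subset size $b-t+1$ is a legitimate size in the range $1\le t\le b$, and that a sum of squares over $\R$ vanishes only when each term does.
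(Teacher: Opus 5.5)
Your proof is correct and follows the paper's argument: both use that a sum of real squares vanishes iff every summand does, then check whether some $(b-t+1)$-subset of $B$ can avoid the $\mu_B$ tied indices. You phrase that count via $\card{B^{\neq}}$, while the paper uses the complement $C=B\setminus D$ of size $t-1$; your explicit treatment of the convention case $t=0$ is a small detail the paper leaves implicit.
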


\begin{proof}
A summand $P(i,D)^2$ vanishes if $D$ meets the $\mu_B$ tied indices, and is
strictly positive otherwise.  If $\mu_B\ge t$, every $D$ of size $b-t+1$ must
meet those indices, since $\card{B\setminus D}=t-1<\mu_B$; hence
$R_{i,B}(t)=0$.  If $\mu_B<t$, take $C\subseteq B$ of size $t-1$ containing
every tied index and set $D:=B\setminus C$; then $\card{D}=b-t+1$, no factor
in $P(i,D)$ vanishes, and $R_{i,B}(t)>0$.
\end{proof}

\begin{corollary}\label{cor:mu}
$\mu_B$ is the largest $t\in\{0,1,\dots,b\}$ with $R_{i,B}(t)=0$, and is found
by binary search using at most $\ceil{\log_2(b+1)}$ polynomial tests
$R_{i,B}(t):0$.
\end{corollary}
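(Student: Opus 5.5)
The plan is to derive both claims of Corollary~\ref{cor:mu} directly from Lemma~\ref{lem:count-B}, in two steps: first identify $\mu_B$, then bound the cost of finding it.

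\textbf{Step 1: $\mu_B$ is the largest zero.} Since $0\le\mu_B\le b$, the index $t=\mu_B$ lies in the admissible range $\{0,1,\dots,b\}$. We have $R_{i,B}(\mu_B)=0$: if $\mu_B=0$ this holds by the convention $R_{i,B}(0)=0$, and otherwise it follows from Lemma~\ref{lem:count-B} with $t=\mu_B$. For any $t\in\{\mu_B+1,\dots,b\}$ we have $t\ge1$ and $\mu_B<t$, so Lemma~\ref{lem:count-B} gives $R_{i,B}(t)\neq0$ (indeed $R_{i,B}(t)>0$, being a sum of squares). Hence $\mu_B$ is the largest $t$ with $R_{i,B}(t)=0$. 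More precisely, the predicate $[R_{i,B}(t)=0]$ is monotone on $\{0,\dots,b\}$: it holds exactly for $t\le\mu_B$. For $t\ge1$ this is again Lemma~\ref{lem:count-B}, and $t=0$ holds by convention.

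\textbf{Step 2: cost of the search.} By Step~1 the predicate has the threshold form ``true for $t\le\mu_B$, false beyond'' over $b+1$ candidate values. Standard binary search therefore locates $\mu_B$. The search maintains an interval $[\ell,h]$ with the invariant $R_{i,B}(\ell)=0$ and $\mu_B\le h$, starting from $\ell=0$ and $h=b$; no test is spent at $t=0$, since that value is known by convention. Each step queries a midpoint $t\in(\ell,h]$ via the test $R_{i,B}(t):0$, which is a genuine polynomial test because $R_{i,B}(t)$ is a polynomial in $x$. Each step at least halves the number of remaining candidates, so $\ceil{\log_2(b+1)}$ tests suffice to reduce the $b+1$ candidates to one.

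\textbf{Main obstacle.} The only delicate point is the off-by-one accounting: the candidate set is $\{0,\dots,b\}$, of size $b+1$, and the convention at $t=0$ must be handled so that it never costs a test. This is why the bound is $\ceil{\log_2(b+1)}$ rather than $\ceil{\log_2 b}$. The degenerate case $b=0$ is consistent with this bound: then $\mu_B=0$ is immediate and $\ceil{\log_2 1}=0$ tests are used.
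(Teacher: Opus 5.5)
Your proof is correct and takes the route the paper intends. The paper gives no separate proof and treats the corollary as immediate from Lemma~\ref{lem:count-B}: the lemma together with the convention $R_{i,B}(0)=0$ makes $R_{i,B}(t)=0$ equivalent to $t\le\mu_B$ on $\{0,\dots,b\}$, and binary search over these $b+1$ candidates, never testing $t=0$, finds the threshold. One small precision: a query shrinks $s$ candidates to at most $\lceil s/2\rceil$ rather than literally halving them, but since $\lceil\lceil x/2\rceil/2\rceil=\lceil x/4\rceil$ you still have at most $\lceil (b+1)/2^k\rceil$ candidates after $k$ tests, which gives exactly the stated bound $\ceil{\log_2(b+1)}$.
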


\subsection{Step 2: parity after removing the tied indices}

With $\mu_B$ known, set
\begin{equation}\label{eq:Pi}
  \Pi_{i,B}\;:=\;\sum_{\substack{D\subseteq B\\ \card{D}=b-\mu_B}}\,
  \prod_{a\in D}(x_i-x_a).
\end{equation}

\begin{lemma}\label{lem:parity-B}
$\Pi_{i,B}=\prod_{a\in B^{\neq}}(x_i-x_a)$, the empty product being~$1$.
\end{lemma}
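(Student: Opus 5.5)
The plan is to split the sum defining $\Pi_{i,B}$ according to whether each subset $D$ avoids the tied indices. Write $T:=B\setminus B^{\neq}=\{a\in B: x_a=x_i\}$, so $\card{T}=\mu_B$ and $B$ is the disjoint union of $T$ and $B^{\neq}$. Every summand $\prod_{a\in D}(x_i-x_a)$ with $D\cap T\neq\emptyset$ contains a factor $x_i-x_a=0$ and therefore vanishes. So only the subsets $D\subseteq B$ with $\card{D}=b-\mu_B$ and $D\cap T=\emptyset$ contribute, that is, those $D\subseteq B^{\neq}$ of that size.

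Since $\card{B^{\neq}}=b-\mu_B$, the only subset of $B^{\neq}$ of size $b-\mu_B$ is $B^{\neq}$ itself. The sum therefore collapses to the single term $\prod_{a\in B^{\neq}}(x_i-x_a)$.

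Two edge cases need checking. If $\mu_B=b$, then $B^{\neq}=\emptyset$ and the sum ranges over the single set $D=\emptyset$. Its product is the empty product $1$, which matches the statement. If $\mu_B=0$, the sum is just $P(i,B)$, as it should be.

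There is no real obstacle. The one point to state explicitly is that the value $\mu_B$ used in~\eqref{eq:Pi} is the true tie count, which Corollary~\ref{cor:mu} supplies. The identity is a pointwise evaluation at the given input $x$, not a polynomial identity. As a consequence, each vanishing summand is killed by an actual zero factor.
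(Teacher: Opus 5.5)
Your proof is correct and matches the paper's argument: every summand whose $D$ meets the tied indices has a vanishing factor, and the only subset of size $b-\mu_B$ avoiding them is $B^{\neq}$ itself. Your explicit checks of the edge cases $\mu_B=b$ and $\mu_B=0$ are a harmless addition that the paper leaves implicit.
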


\begin{proof}
Among the subsets $D\subseteq B$ of size $b-\mu_B$, the unique one disjoint
from the $\mu_B$ tied indices is $D=B^{\neq}$; every other $D$ contributes a
product with a vanishing factor.
\end{proof}

\begin{proposition}\label{prop:simulate}
For every $B\subseteq[n]\setminus\{i\}$, the sign of $\Pi_{i,B}$ equals the
sign of $P(i,B^{\neq})$ and is determined by at most
$\ceil{\log_2(\card{B}+1)}+1$ polynomial tests of degree at most $2\card{B}$.
\end{proposition}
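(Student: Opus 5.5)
The plan is to combine Corollary~\ref{cor:mu} with Lemma~\ref{lem:parity-B} and then count tests and degrees.

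First, the sign identity. By Lemma~\ref{lem:parity-B}, $\Pi_{i,B}=\prod_{a\in B^{\neq}}(x_i-x_a)=P(i,B^{\neq})$ as real numbers, so the two signs agree. Every factor $x_i-x_a$ with $a\in B^{\neq}$ is nonzero, so $\Pi_{i,B}\neq0$. Its sign is therefore strictly $\pm1$ and equals $(-1)^{\card{\{a\in B:x_a>x_i\}}}$. This holds because every $a$ with $x_a>x_i$ lies in $B^{\neq}$; I will note this for the later embedding.

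Second, the test count. By Corollary~\ref{cor:mu}, the binary search over $t\in\{0,\dots,b\}$ determines $\mu_B$ using at most $\ceil{\log_2(b+1)}$ tests of the form $R_{i,B}(t):0$. The search is valid because Lemma~\ref{lem:count-B} makes the predicate ``$R_{i,B}(t)=0$'' monotone: it holds exactly for $t\le\mu_B$, and $R_{i,B}(0)=0$ trivially. Once $\mu_B$ is known, the tree branches to the node whose polynomial is $\Pi_{i,B}$ for that value of $\mu_B$. A single test $\Pi_{i,B}:0$ then reveals the sign, for a total of $\ceil{\log_2(\card{B}+1)}+1$ tests. The point to stress is that $\Pi_{i,B}$ depends on $\mu_B$ only through the index $b-\mu_B$ in~\eqref{eq:Pi}. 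Each branch of the decision tree therefore queries one fixed polynomial, which is legitimate in a polynomial decision tree. There is one edge case: if $\mu_B=b$, then $\Pi_{i,B}=1$ and the final test is vacuous but harmless.

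Third, the degrees. For $t\ge1$, each summand $P(i,D)^2$ of $R_{i,B}(t)$ has degree $2\card{D}=2(b-t+1)\le 2b$. The polynomial $\Pi_{i,B}$ has degree at most $b-\mu_B\le b$. All tests therefore have degree at most $2\card{B}$.

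The main obstacle is the adaptivity point in the second step: we must make sure that choosing the polynomial based on earlier outcomes fits the decision-tree model. This is immediate because the model branches on outcomes. Beyond that, the argument is bookkeeping on top of the earlier lemmas.
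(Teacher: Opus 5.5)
Your proposal is correct and takes the same approach as the paper, whose proof simply combines Corollary~\ref{cor:mu} with Lemma~\ref{lem:parity-B}. Your explicit count of tests, your degree bound, and your observation that choosing $\Pi_{i,B}$ after $\mu_B$ is learned is just branching in the decision tree fill in details that the paper leaves implicit.
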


\begin{proof}
Combine Corollary~\ref{cor:mu} and Lemma~\ref{lem:parity-B}.
\end{proof}

\begin{remark}[Beyond max-finding]
Proposition~\ref{prop:simulate} is a standalone gadget: for any pivot $i$
and set $B$, it turns a parity test $P(i,B):0$ into $O(\log\card{B})$
polynomial tests whose outcome ignores indices tied with $x_i$.
\end{remark}

\section{Substitution into the Ting--Yao algorithm}\label{sec:embed}

Throughout this section, $J_i:=\{\,j\in[n]:x_j>x_i\,\}$.

\subsection{The Ting--Yao algorithm}

Fix $c>0$ and set $m:=\ceil{3c\log_2 n}$, $m_0:=\floor{10c\log_2 n}$.  The
proof of~\cite{TY94} chains three lemmas:
\begin{itemize}
  \item \emph{Lemma~2.}\ \ Draw $m$ independent uniform random subsets
        $B_1,\dots,B_m\subseteq[n]\setminus\{i\}$ and ask the parity tests
        $P(i,B_k):0$.  If every answer is ``$>$'', declare ``$i$ is max'';
        otherwise return the first $B_k$ whose answer is ``$<$''.
  \item \emph{Lemma~3.}\ \ Given $B$ with $P(i,B)<0$, recursively bisect
        $B$, using one parity test per level to choose the half to recurse
        into.  After $\ceil{\log_2\card{B}}$ tests this returns some
        $i'\in B$ with $x_{i'}>x_i$, uniformly distributed over $B\cap J_i$.
  \item \emph{Lemma~1.}\ \ Chain the two; one stage costs at most
        $m+\ceil{\log_2(n-1)}$ parity tests.
\end{itemize}
The outer procedure FINDMAX iterates Lemma~1 for at most $m_0$ rounds, sets
$i\gets i'$ each round, and halts on a declaration or on exhaustion.  In total,
$O((\log n)^2)$ parity tests are made; under distinctness, Theorem~1
of~\cite{TY94} bounds the failure probability by $O(n^{-c})$.

\subsection{Plugging in the simulator}

Replace each parity test $P(i,B):0$ in Lemmas~2 and~3 by the simulator of
Section~\ref{sec:simulate}.  By Proposition~\ref{prop:simulate} this returns
the sign of $\Pi_{i,B}=P(i,B^{\neq})$ in
$O(\log\card{B})\subseteq O(\log n)$ ordinary polynomial tests.  Since
$J_i\cap B=J_i\cap B^{\neq}$,
\begin{equation}\label{eq:sign-parity}
  \sign\Pi_{i,B}\;=\;(-1)^{\card{B\cap J_i}},
\end{equation}
so the simulated answer is ``$>$'' or ``$<$'' according as $\card{B\cap J_i}$
is even or odd; ``$=$'' never occurs.

\paragraph{Lemma~2 with the simulator.}
For each $j\in J_i$ and each $k$, $\Pr[j\in B_k]=\tfrac12$ independently.  If
$x_i$ is a maximum then $J_i=\emptyset$, every simulated answer is ``$>$''
by~\eqref{eq:sign-parity}, and the algorithm declares.  Otherwise
$\card{B_k\cap J_i}$ is odd with probability exactly $\tfrac12$, so the
algorithm outputs some $B_k$ with probability $1-2^{-m}$.  Conditioned on
this event, $B_k\cap J_i$ is a uniform random odd-sized subset of $J_i$.

\paragraph{Lemma~3 with the simulator.}
Maintain the invariant ``$\card{B\cap J_i}$ is odd'', supplied by Lemma~2's
output.  Splitting $B=B'\sqcup B''$, exactly one of $\card{B'\cap J_i}$ and
$\card{B''\cap J_i}$ is odd; by~\eqref{eq:sign-parity} the simulator on $B'$
identifies which, and we recurse into that half.  After $\ceil{\log_2\card{B}}$
simulated tests the recursion reaches a singleton $\{i'\}\subseteq B\cap J_i$,
so $x_{i'}>x_i$.  Each split depends on $B$ only through
$\card{B\cap J_i}\bmod 2$, exactly as in~\cite{TY94}, so their symmetry
argument carries over and $i'$ is uniform on $B\cap J_i$.

\begin{remark}[Exponential search for $\mu_B$]\label{rem:exp-search}
Corollary~\ref{cor:mu} uses $O(\log b)$ tests.  Exponential search on~$t$
with the predicate $R_{i,B}(t)=0$---double $t$ until the predicate fails,
then binary search on the bracketing dyadic interval---uses
$O(\log(\mu_B+1))$ tests, matching the $O(\log b)$ bound only when
$\mu_B=b^{\Theta(1)}$.  The refinement does not change the per-call $O(\log n)$
bound, but amortizing it along the recursion of~\cite{TY94} could in
principle shave one $\log n$ factor from our $O((\log n)^3)$ depth; we do
not pursue this.
\end{remark}

\begin{remark}[A balanced multiset]\label{rem:sqrtn}
A representative tie-heavy input has $\Theta(\sqrt{n})$ distinct values,
each appearing $\Theta(\sqrt{n})$ times (adjust one multiplicity if $n$ is
not a perfect square).
\end{remark}

\subsection{Main result}\label{sec:main}

\begin{theorem}\label{thm:main}
For every fixed $c>0$ there is a randomized polynomial decision tree of depth
$O((\log n)^3)$ that, on any input $x_1,\dots,x_n\in\R$, outputs an index
$i^{\star}\in[n]$ with $x_{i^{\star}}=\max_j x_j$ and failure probability
$O(n^{-c})$.
\end{theorem}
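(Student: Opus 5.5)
The plan is to run FINDMAX from~\cite{TY94} exactly as written, with every parity test $P(i,B):0$ replaced by the simulator of Proposition~\ref{prop:simulate}. I will then check depth, correctness on halting, and the failure probability. The failure bound will be obtained by a stochastic-domination coupling with the distinct-input process, rather than by redoing the Ting--Yao analysis.

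\emph{Depth.} FINDMAX makes at most $m_0=O(\log n)$ rounds. Each round makes at most $m+\ceil{\log_2(n-1)}=O(\log n)$ parity tests (Lemma~1). By Proposition~\ref{prop:simulate}, each parity test costs at most $\ceil{\log_2 n}+1=O(\log n)$ ordinary polynomial tests. Hence the depth is $O((\log n)^3)$.

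\emph{Correctness on declaration.} By~\eqref{eq:sign-parity}, a false declaration is impossible in a strong sense: a declaration ``$i$ is max'' means $m$ simulated answers ``$>$''. If $J_i\neq\emptyset$, each answer is ``$>$'' independently with probability exactly $\tfrac12$. So a round at a non-maximal $i$ declares with probability $2^{-m}\le n^{-3c}$. A union bound over the $m_0=O(\log n)$ rounds gives probability $O(n^{-c})$ that any declaration is wrong. Conversely, if $J_i=\emptyset$ then $x_i=\max_j x_j$, and the declaration is certain. It remains to bound the probability that the $m_0$ rounds are exhausted before reaching an index with $J_i=\emptyset$.

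\emph{Main step: the round count.} By the modified Lemma~2 and Lemma~3, whenever a round at $i$ with $J_i\neq\emptyset$ does not declare, the new pivot $i'$ is uniform on $B\cap J_i$. Here $B\cap J_i$ is a uniform random odd-sized subset of $J_i$, and by symmetry over $J_i$ the element $i'$ is uniform on $J_i$. Write $r=\card{J_i}$. Fix any tie-breaking total order on $[n]$ that refines the order of the values, and list $J_i$ in increasing order of it. If $i'$ is the $s$-th element of this list, then $\card{J_{i'}}\le r-s$, since everything strictly above $x_{i'}$ lies above $i'$ in the refined order and inside $J_i$. Thus $r'=\card{J_{i'}}$ is stochastically dominated by a uniform variable on $\{0,\dots,r-1\}$, which is exactly the rank-transition law in the distinct case analysed in~\cite{TY94}. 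I would couple the two processes round by round, using the same uniform $s$. Under this coupling, the tie process has $r$ at most the distinct process's $r$ at every round, and it reaches $r=0$ no later. Consequently, the event ``more than $m_0$ rounds needed'' has probability at most its distinct-input counterpart. Theorem~1 of~\cite{TY94} bounds that counterpart by $O(n^{-c})$; this is the usual harmonic-number argument for the number of records.

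The main obstacle is making this coupling rigorous. The Ting--Yao bound is phrased for their algorithm on distinct inputs, not for an abstract rank chain. So I would first extract from their proof the statement that, for the chain $r\mapsto\mathrm{Unif}\{0,\dots,r-1\}$ started at $r\le n-1$, $\Pr[\text{more than } m_0 \text{ steps}]=O(n^{-c})$. Monotone coupling then transfers this bound. Summing the three failure sources gives failure probability $O(n^{-c})$: wrong declarations, non-declaring rounds at non-maxima (already covered), and exhaustion of the $m_0$ rounds. On exhaustion we output the current $i$, and any error there is counted as failure.
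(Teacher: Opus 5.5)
Your argument is correct, and its core observation matches the paper's: if $i'$ sits in position $s$ of the sorted list of $J_i$, then $\card{J_{i'}}\le r-s$, and ties only help. Where you differ is in how this is fed back into~\cite{TY94}.

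The paper extracts only the one-step inequality $\Pr[r_\ell>(r_{\ell-1}-1)/2\mid i_{\ell-1}]\le\frac12$, which is what the Ting--Yao halving analysis consumes, and then reuses that analysis verbatim. You prove the stronger stochastic domination $r'\preceq\mathrm{Unif}\{0,\dots,r-1\}$ and transfer the whole distinct-input tail bound by a monotone coupling.

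Your version is more robust. It does not depend on how~\cite{TY94} phrase their proof, and the record-chain tail bound can even be proved directly, since the step count is a sum of independent Bernoulli$(1/k)$ variables. The price is the coupling itself. It must be run conditionally on the full history, because the tie process is not Markov in $r$ alone. The domination does hold conditionally on the history, so a sequential coupling goes through. The paper's route is lighter because it checks exactly the interface the existing analysis needs.

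There are two small slips. First, a false declaration is not ``impossible''; it has probability $2^{-m}$ per round, as your own computation then shows. Second, the two chains cannot literally share ``the same uniform $s$'' when their current values differ. Instead, draw a common $U\in[0,1)$ and set $s=\floor{Ur}+1$ in each chain. Then the next value $r-s=\ceil{(1-U)r}-1$ is nondecreasing in $r$, so the order between the chains is preserved.

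Finally, as with the paper's parameters, the chain bound with $m_0=\floor{10c\log_2 n}$ needs $c$ bounded below. For $c<1/(10\log_2 e)$, $m_0$ falls below the expected $H_{n-1}\sim\ln n$ steps. This is harmless, since you may run with $\max(c,1)$ in place of $c$.
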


\begin{proof}
\emph{Cost.}  Each of the $O((\log n)^2)$ parity tests of FINDMAX is replaced
by $O(\log n)$ ordinary polynomial tests, for a total of $O((\log n)^3)$.

\emph{Correctness.}  By the analysis of Section~\ref{sec:embed}, the
simulated Lemmas~2 and~3 satisfy the same input/output guarantees as
in~\cite{TY94}: declare when $x_i$ is a maximum, otherwise output $i'$
uniform on $J_i$ with success probability $1-2^{-m}$ per round.  Set
$r_\ell:=\card{J_{i_\ell}}$ along the FINDMAX walk; then $r_\ell<r_{\ell-1}$
at every step, and
\[
  \Pr\bigl[r_\ell>(r_{\ell-1}-1)/2\bigm|i_{\ell-1}\bigr]
  \;=\;\frac{\card{\{j\in J_{i_{\ell-1}}:\card{J_j}>(r_{\ell-1}-1)/2\}}}{r_{\ell-1}}
  \;\le\;\frac{\floor{r_{\ell-1}/2}}{r_{\ell-1}}
  \;\le\;\tfrac12,
\]
by counting positions in the value-sorted list of $J_{i_{\ell-1}}$ (ties
only shrink the numerator).  The rest of~\cite{TY94}'s analysis applies,
giving failure probability $O(n^{-c})$.
\end{proof}

\section*{Acknowledgments}
I thank Michael Ben-Or for bringing this problem and its solution
to my attention.

\bibliographystyle{alpha}
\bibliography{extension}

@article{TY94,
  author  = {Ting, Hing F. and Yao, Andrew C.},
  title   = {{A randomized algorithm for finding maximum with $O((\log n)^2)$ polynomial tests}},
  journal = {Information Processing Letters},
  volume  = {49},
  number  = {2},
  pages   = {165--167},
  year    = {1994},
  doi     = {10.1016/0020-0190(94)90052-3},
}

@article{Rabin72,
  author  = {Rabin, Michael O.},
  title   = {Proving simultaneous positivity of linear forms},
  journal = {Journal of Computer and System Sciences},
  volume  = {6},
  pages   = {639--650},
  year    = {1972},
}

\end{document}